\newtheorem{theorem}{Theorem}
\newtheorem{lemma}{Lemma}
\newtheorem{proposition}{Proposition}
\newtheorem{corollary}{Corollary}
\theoremstyle{definition}
\newtheorem{definition}{Definition}
\theoremstyle{remark}
\DeclareMathOperator{\bin}{bin}
\DeclareMathOperator{\att}{att}
\DeclareMathOperator{\act}{act}
\DeclareMathOperator{\pool}{pool}
\DeclareMathOperator{\uha}{UHA}
\DeclareMathOperator{\aha}{AHA}
\newcommand{\layers}{K}
\newcommand{\parity}{\texttt{PARITY}}
\newcommand{\majority}{\texttt{MAJORITY}}
\newcommand{\equality}{\texttt{EQUALITY}}
\newcommand{\dyck}{\texttt{DYCK}}
\newcommand{\shuffle}{\texttt{SHUFFLE}}
\newcommand{\palindromes}{\texttt{PALINDROMES}}
\title{Formal Language Recognition by Hard Attention Transformers: Perspectives from Circuit Complexity}
\author{Yiding Hao, Dana Angluin, and Robert Frank \\
    Yale University \\
    New Haven, CT, USA \\
    {\tt firstname.lastname@yale.edu} \\
}
\begin{document}
\maketitle
\begin{abstract}
    This paper analyzes three formal models of Transformer encoders that differ in the form of their self-attention mechanism: \textit{unique hard attention} (UHAT); \textit{generalized unique hard attention} (GUHAT), which generalizes UHAT; and \textit{averaging hard attention} (AHAT).  We show that UHAT and GUHAT Transformers, viewed as string acceptors, can only recognize formal languages in the complexity class AC$^0$, the class of languages recognizable by families of Boolean circuits of constant depth and polynomial size. This upper bound subsumes \citeauthor{Hahn20}'s (\citeyear{Hahn20}) results that GUHAT cannot recognize the \dyck\ languages or the \parity\ language, since those languages are outside AC$^0$ \citep{FSS84}. In contrast, the non-AC$^0$ languages \majority\ and \dyck-$1$ are recognizable by AHAT networks, implying that AHAT can recognize languages that UHAT and GUHAT cannot.
\end{abstract}

\section{Introduction}
\label{sec:introduction}

The Transformer architecture for neural networks \citep{Vaswanietal2017} has yielded remarkable advances in performance on a variety of benchmark tasks in natural language processing. These advances have spurred considerable interest in understanding the capabilities and limitations of the Transformer architecture. While Transformer networks are extremely complex when deployed at scale, theoretical studies such as those of \citet{PMB2019}, \citet{YRBRK2019}, \citet{Hahn20}, and \citet{MGSS2021} have uncovered meaningful insights about the expressive power of Transformers by formulating abstract models of the self-attention mechanism and analyzing their computational power. 

In this work, we analyze three restricted models of self-attention based on their ability to recognize formal languages. All three models use \textit{hard attention}---meaning that each attention head attends only to the position or positions with the highest attention score, with no attention paid to any of the other positions---but differ in how they behave in the case of ties in the maximum attention value.  In the first two models we study,  the attention mechanism returns the value at exactly one position (for example, the leftmost) in case several positions tie for the maximum attention value. The first such model, which we call \emph{generalized unique hard attention Transformers} (GUHAT) and was defined by \citet{Hahn20}, imposes no restrictions on the nature of activation values or the functions the network uses to compute them.  The second model, \emph{unique hard attention Transformers} (UHAT), was defined and studied by \citet{YPPN2021} and is a more concrete version of GUHAT that incorporates restrictions on the nature of activation values and  computations. In the third model, which we call \emph{averaging hard attention Transformers} (AHAT), the attention mechanism returns the uniform average of the values at positions with the maximum attention value.  This is the definition of hard attention used by \citet{PMB2019}, \citet{YRBRK2019}, and \citet{MGSS2021}.\footnote{\citet{MGSS2021} call it \emph{saturated hard attention}.}

Our main contribution is to prove that GUHAT and UHAT can only recognize formal languages in AC$^0$, the class of formal languages recognized by a family of Boolean circuits of constant depth and polynomial size, whereas AHAT can recognize formal languages outside of AC$^0$.  More formally, we prove that any formal language recognized using a GUHAT is also recognized by a family of Boolean circuits of constant depth and polynomial size, establishing AC$^0$ as an upper bound on the expressiveness of UHAT and GUHAT. We also show that every UHAT can be simulated by an AHAT, establishing UHAT as a subclass of AHAT. Based on the classical results of \citet{FSS84}, our upper bound subsumes \citeauthor{Hahn20}'s (\citeyear{Hahn20}) results that GUHAT cannot recognize the \dyck\ languages or the \parity\ language, neither of which belongs to AC$^0$. Furthermore, our result combines with \citeauthor{PMB2019}'s (\citeyear{PMB2019}) AHAT implementation of the \majority\ language and \citeauthor{bhattamishra-etal-2020-ability}'s (\citeyear{bhattamishra-etal-2020-ability}) AHAT implementation of \dyck-$1$
(neither of which is in AC$^0$) to show that AHAT can recognize languages that GUHAT cannot. Recently, \citet{MGSS2021} have given an upper bound on the power of AHAT: namely, that every formal language recognizable using averaging hard attention is recognizable using a family of circuits of constant depth and polynomial size with Boolean and majority gates; that is, a family of circuits in the complexity class TC$^0$, known to be a strict superset of AC$^0$. Taken together, our paper establishes the following relationships between the three models we consider of hard-attention Transformers.
\begin{align*}
    \text{UHAT} \subseteq \text{GUHAT} &\subseteq  \text{AC}^0 \\
    \text{UHAT} \subsetneq  \text{AHAT} &\nsubseteq  \text{AC}^0 
\end{align*}

\section{Preliminaries}
\label{sec:preliminaries}

Let $\Sigma$ be a fixed finite alphabet of symbols, and
let $\$$ be a distinct end-of-sequence symbol not in $\Sigma$.
The set of strings of symbols over $\Sigma$ of length $n$ is denoted by $\Sigma^n$, and the 
set of all finite strings over $\Sigma$ is denoted $\Sigma^*$.
A (formal) language over $\Sigma$ is any subset of $\Sigma^*$.

The set of integers between $i$ and $j$ inclusive is denoted
$[i..j]$. Define the function $\ell:\mathbb{N} \to \mathbb{N}$ as $\ell(n) = \lceil \log_2(n+1) \rceil$ for all $n$, and define $\bin(i,n)$ to be the binary representation of $i$ as a string of length $\ell(n)$, for every $i \in [1..n]$. For example, $\bin(6, 30) = 00110$. If $P$ is a logical predicate, then $\{P\}$ denotes the truth value of $P$; for example, $\{(x \ge y) \vee (z > 3)\}=1$ when  $x \ge y$ or $z>3$, and is $0$ otherwise.

\section{Circuit Complexity}

Our analysis of UHAT and GUHAT is carried out within the framework of \textit{circuit complexity}, in which the complexity of a computational system is measured by the size, depth, and types of gates of a Boolean circuit implementing that system. In this section we review the basic concepts, definitions, and results of circuit complexity used by our analysis. A detailed overview is provided in Chapter 6 of \citet{aroraComputationalComplexityModern2009}.

\subsection{Boolean Circuits}

Boolean circuits are a formal model of computational systems based on logic gates. Roughly speaking, a Boolean circuit consists of binary-valued input and output layers, with feedforward connections\footnote{We consider only acyclic circuits.} to one another via intermediate \textit{gates} that implement logical operations. We use the following definition of Boolean circuits.

\begin{definition}
    A \textit{Boolean circuit with $n$ inputs and $m$ outputs} is a labeled directed acyclic graph satisfying the following conditions.
    There are $n$ distinguished \textit{input vertices} labeled with the variables $x_1, x_2, \ldots, x_n$. Each input vertex has fan-in $0$.  The rest of the vertices are \textit{gates}, each having a label from Constant-0, Constant-1, NOT, AND, or OR.  The Constant-0 and Constant-1 gates have fan-in $0$, NOT gates have fan-in $1$, and AND and OR gates have unbounded fan-in.  Finally, the labels $z_1, z_2, \ldots, z_m$ are applied to some (not necessarily distinct) vertices; these are the \textit{outputs}.
    
    We refer to the edges of a Boolean circuit as \textit{wires}. The \emph{size} of a circuit is the number of wires it contains, and the \emph{depth} of a circuit is the maximum length of a directed path of wires from an input vertex to an output. A Boolean circuit $C$ computes a Boolean function from $\lbrace 0, 1 \rbrace^n$ to $\lbrace 0, 1 \rbrace^m$; we denote its output on input $x = x_1x_2 \dots x_n$ by $C(x)$.
\end{definition}

Observe that a Boolean circuit has a fixed number of input vertices, and therefore can only take as input bit strings of a fixed length. We would like to define circuit computation for a map defined on all of $\lbrace 0, 1 \rbrace^*$. To that end, we allow different circuits for inputs of different lengths.
\begin{definition}
    A \emph{family of circuits} is a sequence $\{C_n\}$, where for each integer $n \ge 0$, $C_n$ is a Boolean circuit with $n$ inputs and one output. A map $f$ from $\{0,1\}^*$ to $\{0,1\}$ is computed by a family of circuits $\{C_n\}$ if and only if for all $n$ and all $x \in \{0,1\}^n$, $f(x) = C_n(x)$.
\end{definition}
The class AC$^0$ is defined by setting restrictions on the size and depth of circuits within a family of Boolean circuits.
\begin{definition}
    A family of circuits is \emph{of constant depth} if there exists a constant $K$ such that the depth of $C_n$ is bounded by $K$ for all $n$. A family of circuits is \emph{of polynomial size} if there exists a constant $c$ such that the size of $C_n$ is bounded by $n^c + c$ for all $n$. The set AC$^0$ is the set of families of Boolean circuits of both constant depth and polynomial size. 
\end{definition}
We relate formal languages with families of circuits by identifying languages $L \subseteq \Sigma^*$ with Boolean functions that classify strings as belonging to $L$ or not. Formally speaking, let $\Sigma$ be a finite alphabet. A \emph{binary symbol encoding} of $\Sigma$ is an injective map $h$ from $\Sigma$ to $\{0,1\}^s$, where $s = \ell(|\Sigma|)$.  Thus $h$ maps each symbol to a distinct binary string of length $s$. We extend $h$ to a homomorphism on strings from $\Sigma^*$. We say that the circuit family $\{C_n\}$ \textit{recognizes the language $L$ over $\Sigma$} if there is a binary symbol encoding $h$ of $\Sigma$ such that for every $n$ and every $x \in \Sigma^n$, $C_{sn}(h(x)) = 1$ if and only if $x \in L$. With this definition of language recognition via Boolean circuits, we say that a language is \textit{in AC$^0$} if and only if it is recognized by a family of Boolean circuits in AC$^0$.

\subsection{Non-AC$^0$ Languages}

Having defined the class AC$^0$, we present some examples of languages \textit{not} belonging to this class. First, the following three languages were shown by \citet{FSS84} to fall outside AC$^0$.
\begin{definition}
    We define the following languages over the alphabet $\lbrace 0, 1 \rbrace$. The language \parity\ is the set of all strings containing an even number of $1$s; \majority\ is the set of strings with at least as many $1$s as $0$s; and \equality\ is the set of strings with exactly as many $1$s as $0$s.
\end{definition}

Additionally, we show later in this paper (\autoref{Dyck-languages-not-in-AC0}) that \dyck-$1$ also falls outside AC$^0$.
\begin{definition}
    The language \dyck-$k$ is the set of strings over an alphabet of $k$ types of pairs of brackets that are correctly nested and matched. For example, \dyck-$2$ over the alphabet $\{(,),[,]\}$ can be described by a context free grammar with productions $S \to \varepsilon$, $S \to (S)$, $S \to [S]$, and $S \to SS$. The language \dyck-$(k,D)$ is the set of strings in \dyck-$k$ in which the depth of nesting of brackets never exceeds $D$. The language \shuffle-$k$ is the shuffle (arbitrary interleaving) of strings from $k$ versions of \dyck-$1$ each using a different type of bracket pair.
 \end{definition}
 
 Finally, we define \palindromes, a language shown in \autoref{sec:palindrome-example} to be in GUHAT.

 \begin{definition}
    The language \palindromes\ is the set of strings equal to their reverses, which can be described by the context free grammar with productions $S \to \varepsilon$, $S \to \sigma$, and $S \to \sigma S \sigma$ for each alphabet symbol $\sigma$.
\end{definition}

\section{Hard Attention Transformers}
\label{sec:definitions-hard-attention-transformers}

We now define the three kinds of hard attention Transformers studied in this paper: GUHAT, UHAT, and AHAT. These formalisms are models of computation inspired by the encoder portion of the Transformer architecture. They conceptualize Transformers as cascading layers of feature extractors that convert a sequence of embeddings into increasingly higher-level representations. 


\subsection{General Framework}

We begin by presenting a general framework that subsumes the three hard attention Transformer models. Formally, a generalized Transformer is a device that maps a string $x \in \Sigma^*\$$ to $1$ or $0$, signifying that $x$ is \textit{accepted} or \textit{rejected}, respectively. Each such device is parameterized by a collection of functions described as follows.
\begin{definition}
    A \textit{generalized Transformer with $K$ layers and $H$ attention heads} is a tuple $T = (\Sigma, \mathcal{A}, f, f^{\att}_{k, h}, f^{\pool}, f^{\act}_k, g\,  \mid  k \in [1..K], h \in [1..H])$ where
    \begin{itemize}
        \item $\Sigma$ is the \textit{input alphabet},
        
        \item $\mathcal{A}$ is the set of \textit{activation values},
        
        \item $f:\Sigma \cup \lbrace \$ \rbrace \times \mathbb{N} \times \mathbb{N} \to \mathcal{A}$ is the \textit{input function}, 
        
        \item $f_{k,h}^{\att}:\mathcal{A} \times \mathcal{A} \to \mathbb{R}$ is the \textit{attention function for layer $k$ and head $h$},
        
        \item $f^{\pool}:\mathcal{A}^* \times \mathbb{R}^* \to \mathcal{A}$ is the \textit{pooling function},
        
        \item $f_k^{\act}:\mathcal{A}^{H + 1} \to \mathcal{A}$ is the \textit{activation function for layer $k$}, and
        
        \item $g:\mathcal{A} \to \lbrace 0, 1 \rbrace$ is the \textit{model output function}.
    \end{itemize}
\end{definition}

On input $x_1x_2 \dots x_n$ where $x_n = \$$, a string $y^{(0)}_1y^{(0)}_2 \dots y^{(0)}_n \in \mathcal{A}^n$ of \textit{initial activation values} is given by
\[
y_i^{(0)} = f(x_i, i, n)
\]
for all $i$. Each layer $k$ then produces a string $y^{(k)} = y^{(k)}_1\allowbreak y^{(k)}_2\allowbreak \dots \allowbreak y^{(k)}_n\allowbreak \in \mathcal{A}^n$ of activation values from the previous activation values $y^{(k - 1)} = y^{(k - 1)}_1y^{(k - 1)}_2 \dots y^{(k - 1)}_n$ as follows. First, each attention head $h$ produces an $n \times n$ matrix of \textit{attention scores} $a_{i, j, k, h}$ given by
\[
    a_{i,j,k,h} = f_{k,h}^{\att}\left(y_i^{(k-1)},y_j^{(k-1)}\right)
\] 
for all positions $i, j$ of the input string. Next, the pooling function converts each row of attention scores into an activation value based on $y^{(k - 1)}$:
\begin{align*}
    b_{i, k, h} &= f^{\pool}\Big(y^{(k - 1)}_1, y^{(k - 1)}_2, \dots, y^{(k - 1)}_n, \\
    &\phantom{= f^{\pool}\Big(} a_{i,1,k,h}, a_{i,2,k,h},  \dots, a_{i,n,k,h}\Big)\text{.}
\end{align*}
Finally, the layer output $y^{(k)}$ is computed using the layer's activation function:
\[
    y_i^{(k)} = f_k^{\act}\left(y_i^{(k-1)},b_{i,k,1},\ldots,b_{i,k,H}\right)\text{.}
\]
When $y^{(k)}$ has been computed for all $k \in [1..K]$, the final output of the generalized Transformer $T(x)$ is computed by applying the model output function to the last symbol of $y^{(K)}$; that is,
\[
T(x) = g\left(y_n^{(K)}\right)\text{.}
\]
If $T(x) = 1$, we say that \textit{$T$ accepts $x$}; otherwise, we say that \textit{$T$ rejects $x$}. The \textit{language recognized by $T$}, denoted $L(T)$, is the set of strings $x \in \Sigma^*$ such that $T$ accepts $x\$$.

The formalism we have presented above is fully generalized in the sense that we have placed no restrictions on the activation values $\cal{A}$ or the functions $f$, $f_{k,h}^{\att}$, $f^{\pool}$, $f_k^{\act}$, or $g$, other than to specify their domains and co-domains. The three hard attention Transformer models are derived by placing restrictions upon these elements.

\subsection{Unique and Averaging Hard Attention}

The first restriction we consider is on the form of the pooling function. We consider two types of pooling functions: the \textit{unique hard attention} function, used in GUHAT and UHAT, and the \textit{averaging hard attention} function, used in AHAT.

In unique hard attention, the pooling function simply selects the activation value from the previous layer corresponding to the argmax of the row of attention scores. In case of a tie, the leftmost activation value is selected.
\begin{definition}
    The \textit{unique hard attention function} is the pooling function $f^{\uha}:\mathcal{A}^* \times \mathbb{R}^* \to \mathcal{A}$ defined as follows. On inputs $(y_1, y_2, \dots, y_n) \in \mathcal{A}^n$ and $(a_1, a_2, \dots, a_n) \in \mathbb{R}^n$, let $j \in [1..n]$ be the least position that maximizes $a_j$. Then,
    \[
    f^{\uha}(y_1, \dots, y_n, a_1, \dots, a_n) = y_j.
    \]
\end{definition}

Averaging hard attention is similar to unique hard attention, except that in the case of a tie, the selected activation values are averaged.
\begin{definition}
    Let $\mathcal{A}$ be a vector space over a field containing $\mathbb{Q}$.  The \textit{averaging hard attention function} is the pooling function $f^{\aha}:\mathcal{A}^* \times \mathbb{R}^* \to \mathcal{A}$ defined as follows. On inputs $(y_1, y_2, \dots, y_n) \in \mathcal{A}^n$ and $(a_1, a_2, \dots, a_n) \in \mathbb{R}^n$, let $j_1, j_2, \dots, j_m \in [1..n]$ be all the positions that maximize $a_{j}$. Then,
    \[
    f^{\aha}(y_1, \dots, y_n, a_1, \dots, a_n) = \frac{1}{m} \sum_{i = 1}^m y_{j_i}\text{.}
    \]
\end{definition}

The GUHAT model is defined as the class of generalized Transformers that use unique hard attention.
\begin{definition}
    A \textit{generalized unique hard attention Transformer} is a generalized Transformer whose pooling function is $f^{\uha}$. We use the term \textit{GUHAT} to refer to the class of generalized unique hard attention Transformers, and also to the class of languages they recognize. 
\end{definition}
The GUHAT model mostly follows the definitions of \citet{Hahn20}. It is slightly generalized in allowing the input function $f$ to depend on the input length $n$, and in allowing the activation function $f_k^{\act}$ to depend on the layer $k$, but these generalizations are immaterial. In particular, it is not necessary to assume that the input length $n$ is provided to the input function: if the input function were $f(\sigma,i) = (\sigma,i)$, the subsequent layer could direct attention at every position to position $n$ (because it uniquely contains the end-of-sequence symbol $\$$), at which point the value of $n$ is available at every position.

\subsection{Restricted Models: UHAT and AHAT}

GUHAT allows the activation values $\cal{A}$ and the functions $f$, $f_{k,h}^{\att}$, $f_k^{\act}$, and $g$ to take on \textit{any} arbitrary mathematical value. In practical applications of Transformer networks, however, these components are restricted in specific ways. Many variations of hard attention Transformers attempt to incorporate these restrictions into theoretical models, though they do not entirely agree on the details of these restrictions.

The UHAT and AHAT models adopt many of these restrictions, largely following the definitions of \citet{YPPN2021}. For the sake of computability, we require activation values to be vectors of rational numbers. Following \citet{PMB2019}, we restrict scalars to be rational as well. Next, we assume that the input function $f$ is decomposed into a token embedding function and a position embedding function. Mirroring the more familiar description of attention functions in terms of query, key, and value matrices, we use a bilinear form for attention functions proposed in \citet{LPM2015}. In addition to the unique and averaging hard attention mechanisms, we allow the pooling function to be \textit{future-masked} (where for position $i$ only those positions $j$ with $j \le i$ are considered in the attention computation) or \textit{past-masked} (similarly for $j \ge i$). Finally, we assume that activation functions and the model output function are computed by feedforward neural networks with ReLU activation. These restrictions are summarized below.
\begin{definition}
    For $d \in \mathbb{N}$, a \textit{restricted Transformer of dimension $d$} is a generalized Transformer such that
    \begin{itemize}
        \item the set of activation values is $\mathcal{A} = \mathbb{Q}^d$;
        
        \item the input function is given by
        \[
        f(\sigma, i, n) = f_e(\sigma) + p(i, n)\text{,}
        \]
        where $f_e:\Sigma \cup \lbrace \$ \rbrace \to \mathbb{Q}^d$ is the \textit{token embedding function} and $p:\mathbb{N} \times \mathbb{N} \to \mathbb{Q}^d$ is the \textit{position embedding function};
        
        \item each attention function is of the form
        \[
        f_{k,h}^{\att}(y, y^\prime) = y^\top A_{k,h} y^\prime\text{,}
        \]
        where $A_{k, h} \in \mathbb{Q}^{d \times d}$;
        
        \item the pooling function may be \textit{future-masked} or \textit{past-masked};
        
        \item each activation function is computed by a feedforward neural network with ReLU activation;
        
        \item the output function $g$ is computed by a feedforward neural network with ReLU activation followed by a softmax layer, with $g(y) = 1$ if and only if the output of the network on input $y$ is greater than or equal to $1/2$.
    \end{itemize}
\end{definition}

Because $\Sigma$ is finite, we may assume that the token embedding function is given by a table lookup. Our formulation of position embedding is somewhat more general than the definition of \citet{YPPN2021}, who take the position embedding to be a scalar defined as $p(i,n) = i/n$ that occupies one position of the initial activation vector.

The UHAT and AHAT models are defined to be restricted Transformers that satisfy the above conditions and use unique and averaging hard attention, respectively.
\begin{definition}
    A \textit{unique hard attention Transformer} is a restricted Transformer whose pooling function is $f^{\uha}$ or a future- or past-masked version thereof. An \textit{averaging hard attention Transformer} is a restricted Transformer whose pooling function is $f^{\aha}$ or a future-or past-masked version thereof. We use the terms \textit{UHAT} and \textit{AHAT}, respectively, for these classes of Transformers, and also for the classes of languages they recognize.
\end{definition}

UHAT is clearly a subclass of GUHAT because the former imposes restrictions on the form of the input, attention, activation, and output functions. We suspect, but do not prove, that this inclusion is proper. Moreover, we briefly argue below that UHAT is properly contained in AHAT. 
\begin{proposition}
\label{lemma:uhat-in-ahat}
UHAT is a strict subclass of AHAT.
\end{proposition}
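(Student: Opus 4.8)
The plan is to prove the two inclusions separately: that every UHAT is simulated by some AHAT, and that some language lies in AHAT but not UHAT. For the containment $\text{UHAT} \subseteq \text{AHAT}$, I would exploit that $f^{\uha}$ and $f^{\aha}$ agree on any row of attention scores with a unique maximizer. Given a UHAT $T$, I would build an AHAT $T'$ whose attention scores are those of $T$ perturbed by a tiny position-dependent term $\phi(j)$ that is strictly decreasing in $j$, so that the modified score in layer $k$, head $h$ becomes $a_{i,j,k,h} + \phi(j)$; this is realizable within the bilinear attention form by carrying a position-derived coordinate in the activation vectors, which, as the paper notes, can be kept available at every layer. Any strictly decreasing $\phi$ breaks every tie in favor of the leftmost maximizer, so $f^{\aha}$ on the perturbed scores returns exactly the value $f^{\uha}$ returns on the original scores; projecting $\phi$'s auxiliary coordinate out of the pooled value then makes $T'$ reproduce the activations of $T$ layer by layer, hence $L(T') = L(T)$.

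The main obstacle is choosing $\phi$ small enough, uniformly in $n$, that no position outside the original maximizing tier is ever promoted into it, since the gap between the top two distinct scores in a row can shrink as $n$ grows. I would control this with a quantitative lemma proved by induction on the layers: for a fixed restricted architecture---fixed dimension, fixed rational attention matrices and ReLU weights, and a position embedding that is a fixed rational function of $i$ and $n$---every activation value, and hence every attention score, on an input of length $n$ is rational with denominator dividing $D\,n^{A}$ for constants $A, D$ depending only on the architecture, since each bilinear form and each fixed-weight ReLU layer multiplies denominators by a bounded factor and the depth is constant. Distinct scores in a row then differ by at least $1/(D\,n^{A})$, so taking $\phi(j) = -j/n^{A+2}$, whose range is below $1/(D\,n^{A})$ for all large $n$ (finitely many short inputs being handled directly), breaks ties leftward without ever disturbing the genuine maximizing tier.

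For strictness I would exhibit a witness in $\text{AHAT} \setminus \text{UHAT}$. The language $\majority$ is recognized by an AHAT via \citeauthor{PMB2019}'s (\citeyear{PMB2019}) construction, yet $\majority \notin \text{AC}^0$ by \citet{FSS84}; since $\text{UHAT} \subseteq \text{GUHAT} \subseteq \text{AC}^0$, it follows that $\majority \notin \text{UHAT}$. Together with the containment established above, this yields $\text{UHAT} \subsetneq \text{AHAT}$, completing the argument.
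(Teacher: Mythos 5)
Your overall strategy is the same as the paper's: establish $\text{UHAT} \subseteq \text{AHAT}$ by perturbing each attention score with a small, strictly decreasing position-dependent term (realized through extra coordinates in the activation vectors and the bilinear attention form), so that every tie is broken in favor of the leftmost maximizer and $f^{\uha}$ and $f^{\aha}$ coincide on the perturbed machine; then obtain strictness from \majority\ being in AHAT \citep{PMB2019} but outside AC$^0$ \citep{FSS84}, which contains UHAT by $\text{UHAT} \subseteq \text{GUHAT} \subseteq \text{AC}^0$. The paper's proof sketch does exactly this, subtracting $j/N$ from the scores via two extra components set to $1$ and $i/N$.

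The one step that would fail is your quantitative lemma controlling the perturbation size. You assume the position embedding is ``a fixed rational function of $i$ and $n$,'' which yields denominators dividing $D n^A$ and hence a minimum gap of $1/(D n^A)$ between distinct attention scores. But in this paper's UHAT model, $p:\mathbb{N}\times\mathbb{N}\to\mathbb{Q}^d$ is an \emph{arbitrary} function into the rationals---the paper explicitly notes it may even be uncomputable---so the denominators of the layer-$0$ activations, and therefore the gaps between distinct attention scores on inputs of length $n$, need not be bounded by any polynomial (or any computable function) of $n$. Against such an embedding, $\phi(j) = -j/n^{A+2}$ can promote a position outside the original maximizing tier into the maximum, and the layer-by-layer simulation breaks. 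The repair is immediate and is precisely the paper's move: the perturbation scale is itself installed by the position embedding of the new machine, which may depend on $n$ arbitrarily; since only finitely many attention values arise across the finitely many inputs of length $n$, one may choose $N = N_n$ (non-constructively) so that $n/N_n$ is below the least gap between any two distinct such values. Your polynomial bound is a genuine constructive strengthening, but it is valid only for the subclass of UHATs whose positional encodings have polynomially bounded denominators, such as the $p(i,n) = i/n$ encoding of \citet{YPPN2021}, not for the model as defined here.
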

\begin{proof}[Proof sketch]
Since AHAT recognizes non-AC$^0$ languages \citep{PMB2019,bhattamishra-etal-2020-ability}, it suffices to show that $\text{UHAT} \subseteq \text{AHAT}$. Let $T$ be a UHAT of dimension $d$ recognizing $L$. We define a UHAT $\hat{T}$ of dimension $d+2$ recognizing $L$ that has no ties in its attention values. Since the pooling functions $f^{\uha}$ used in UHAT and $f^{\aha}$ used in AHAT are identical in the absence of ties, replacing the pooling function of $\hat{T}$ with $f^{\aha}$ gives us an AHAT recognizing $L$.

Let $N$ be a sufficiently large integer depending on $n$, specified below.
Each activation value $\hat{y}_i^{(k)}$ in $\hat{T}$ is $y_i^{(k)}$ from $T$ with two additional constant components, set to $1$ and $i/N$ by the input function.  The attention function $\hat{f}_{k,h}^{\att}(\hat{y}_i^{(k-1)}, \hat{y}_j^{(k-1)})$ computes $a_{i,j,k,h}$ using the original attention function and activation values, subtracting the value $j/N$. This is achievable with a bilinear map.

If for $j < \ell$ the attention values $a_{i,j,k,h}$ and $a_{i,\ell,k,h}$ are tied in $T$, then after subtracting $j/N$ and $\ell/N$ respectively, the tie is broken in favor of $j$.
However, $N$ must also be large enough to preserve the order of any two attention values that are not tied.  There are a finite number of different attention values that arise in the computation of $T$ on all the inputs of length $n$, and it suffices to choose $N$ so that $n/N$ is less than the distance between any pair of such attention values.
\end{proof}

\subsection{Prior Results for These Models}

\citet{Hahn20} shows that the languages $1^*$ and $\{a^nb^n \mid n \ge 1\}$ are in GUHAT, and the languages \parity\ and \dyck-$k$ for all $k \ge 1$ are not in GUHAT.
\citet{PMB2019} show that even without positional information, the language \majority\ is in AHAT.
\citet{bhattamishra-etal-2020-ability} show that \shuffle-$k$ is in AHAT, which implies that \dyck-$1$ is in AHAT.
\citet{YPPN2021} show that the language \dyck-$(k,D)$ is in UHAT.
The latter two results use positional masking, but no other positional information.

\section{\palindromes\ in GUHAT}
\label{sec:palindrome-example}

\begin{figure*}
    \centering
    \footnotesize
    \begin{tabular}{l | r c c c c c c}
            \toprule
             \textbf{Input} & $x \mathrel{=}$ & $a$ & $b$ & $c$ & $c$ & $a$ & $\$$ \\\midrule
             Initial Activation Values & $y^{(0)} \mathrel{=}$ & $(a, 1, 6)$ & $(b, 2, 6)$ & $(c, 3, 6)$ & $(c, 4, 6)$ & $(a, 5, 6)$ & $(\$, 6, 6)$ \\
             Layer 1 Activation Values & $y^{(1)} \mathrel{=}$ & $(0, 1)$ & $(1, 2)$ & $(0, 3)$ & $(1, 4)$ & $(0, 5)$ & $(1, 6)$ \\
             Layer 2 Activation Values & $y^{(2)} \mathrel{=}$ & $(1, 2)$ & $(2, 2)$ & $(3, 2)$ & $(4, 2)$ & $(5, 2)$ & $(6, 2)$ \\\midrule
             \textbf{Output} & $g\left(y^{(2)}_6\right) \mathrel{=} $ & & & & & & $0$ \\\bottomrule
    \end{tabular}
    \caption{Activation values computed by a GUHAT Transformer for \palindromes\ as it rejects the input $abcca\$$.}
    \label{fig:abcca}
\end{figure*}
Let us now illustrate how a GUHAT computes by way of example. In this section, we describe a GUHAT with $2$ layers and $1$ head that recognizes the language \palindromes\ over the alphabet $\Sigma = \{ a, b, c \}$. Broadly speaking, this Transformer works as follows. The first layer is responsible for comparing each symbol of the input string with the corresponding symbol on the opposite side of the string, and marking whether the two symbols match. The second layer reads these markings, searching for a mismatch identified by the first layer. If one is found, the model output function returns $0$; otherwise, it returns $1$. For intuition, we simultaneously illustrate the Transformer's computation on the input $abcca\$$, which should be rejected.

The input function is defined as
$f(\sigma,i,n) = (\sigma, i, n)$ 
for each $\sigma \in \Sigma$ and $i \in [1..n]$.
For our example input, the initial (layer $0$) activation values are shown in the first row of \autoref{fig:abcca}.
These activation values are not rational-valued vectors, of course, but the GUHAT model imposes no restriction on the form these values can take.

We define the attention function for layer $1$,
$f_{1,1}^{\att}((\sigma,i,n), (\sigma',j,n))$, to be $\{(j = n-i) \vee (i = j = n)\}$.
For each position $i < n$, this selects the activation at the correct corresponding position, $n-i$.  For position $n$, it selects the activation at position $n$.

We define the activation function for layer $1$ as 
\[f^{\act}_1((x_i, i, n), (x_j, j, n)) = 
\begin{cases}
    (1, i), & x_i \neq x_j \\ 
    (1, i), & i = j = n \\
    (0, i), & \text{otherwise.}
\end{cases}
\]

The layer $1$ activation values for our example input are shown in the second row of \autoref{fig:abcca}.
This indicates that positions $2$ and $4$ found mismatched symbols, and positions $1$, $3$, and $5$ did not.

Layer $2$ gathers the relevant information from layer $1$ into the last position. The layer $2$ attention function is defined by $f_{2,1}^{\att}((r, i), (s, j)) = s$. This directs the attention at every position to the leftmost activation value $(s,j)$ from layer $1$ such that $s = 1$.
In our example, the leftmost such position is $2$, with its activation of $(1,2)$.
If the input sequence had instead been a valid palindrome, none of the positions $i \in [1..n - 1]$ would have been marked with $(1, i)$ by layer $1$. 
In this case, the leftmost position with $s = 1$ would have been the final position $n$, which has the activation value of $(1, n)$.

We define the layer $2$ activation function as
$f_2^{\act}((r, i), (s, j)) = (i, j)$.
For our example input, the activation values for layer $2$ are shown in the third row of \autoref{fig:abcca}.
The activation value at position $n$ will be $(n,n)$
if and only if no earlier position found a symbol mismatch, so the model output function is simply
$g((i,j)) = \{i = j \}$.
With the input sequence $abcca\$$, the activation for position $6$ at layer $2$ is $(6,2)$ and the output value is $0$. For a valid palindrome such as $abcba\$$, the activation for position $6$ at layer $2$ is $(6,6)$ and the output value is $1$. Generalizing this construction to an arbitrary alphabet $\Sigma$, we have the following.
\begin{proposition}
\label{lemma:palindromes}
For any finite alphabet $\Sigma$, the language  \palindromes\ over $\Sigma$ is in GUHAT. 
\end{proposition}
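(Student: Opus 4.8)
The plan is to observe that the two-layer, one-head construction exhibited above for $\Sigma = \{a,b,c\}$ is already alphabet-agnostic, so I would simply verify that the very same functions recognize \palindromes\ over an arbitrary finite $\Sigma$ and invoke them unchanged. The only feature of the alphabet that the construction exploits is the ability to test whether two symbols are equal, and since GUHAT imposes no restriction whatsoever on the activation values or on the functions that compute them, such an equality test is permissible for any finite $\Sigma$. Accordingly, I would reuse verbatim the input function $f(\sigma,i,n) = (\sigma,i,n)$, the layer-$1$ attention function $f_{1,1}^{\att}((\sigma,i,n),(\sigma',j,n)) = \{(j = n-i) \vee (i = j = n)\}$, the layer-$1$ activation function outputting $(1,i)$ when $x_i \neq x_j$ or $i = j = n$ and $(0,i)$ otherwise, the layer-$2$ attention function $f_{2,1}^{\att}((r,i),(s,j)) = s$, the layer-$2$ activation function $f_2^{\act}((r,i),(s,j)) = (i,j)$, and the output function $g((i,j)) = \{i = j\}$.

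The correctness argument I would then record is essentially the one illustrated in the worked example, stated for general $\Sigma$. For an input $x_1 \dots x_n$ with $x_n = \$$, layer-$1$ attention at each position $i < n$ selects position $n-i$, which lies in $[1..n-1]$ and hence carries a genuine alphabet symbol; the middle position, when it exists, attends to itself and matches trivially. Thus layer $1$ marks position $i$ with $(1,i)$ exactly when $x_i \neq x_{n-i}$, i.e.\ exactly at a mismatch, and marks position $n$ with $(1,n)$ unconditionally. The prefix $x_1 \dots x_{n-1}$ therefore lies in \palindromes\ if and only if position $n$ is the \emph{leftmost} position whose first coordinate equals $1$. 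Since layer-$2$ attention routes that leftmost position $j$ to every position, the final activation at position $n$ is $(n,j)$, and $g$ accepts precisely when $j = n$, that is, precisely when the prefix is a palindrome.

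The main obstacle, such as it is, is one of verification rather than construction: I must confirm that none of the six functions secretly depended on $\Sigma$ having exactly three elements. Inspecting them, the symbol comparison $x_i \neq x_j$ is the sole point of contact with the alphabet, and it is well defined for any finite $\Sigma$, while the positional arithmetic ($n-i$ and $i = j = n$) and the integer bookkeeping in both layers are independent of the alphabet entirely. Hence the construction transfers without modification, and the proof amounts to recording this observation together with the correctness argument above.
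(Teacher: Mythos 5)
Your proposal is correct and takes essentially the same approach as the paper: the paper's proof of this proposition is precisely the worked two-layer, one-head construction of \autoref{sec:palindrome-example}, followed by the observation that it generalizes to arbitrary $\Sigma$. You have simply made that generalization step and the correctness argument explicit, which matches the paper's intent exactly.
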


\section{A Normal Form for GUHAT}
\label{sec:normal-form}

Despite the abstractness and generality of the GUHAT model, we can define a normal form representation and show that every Transformer $T$ in GUHAT is equivalent to a  Transformer in GUHAT in this normal form with the same number of layers and heads.
The key idea is to preserve in the activation values all the information from previous layers that has been used to compute them,
by requiring that the input and activation functions just return the tuple of their arguments.
We also require that attention values be integers in the smallest relevant range.

\begin{definition}
A GUHAT with $K$ layers and $H$ heads is in \emph{informative normal form} if and only if the following conditions are satisfied.
\begin{itemize}
    \item The input function is
    $f(\sigma,i,n) = (\sigma,i,n)$.
    \item For each layer $k \in [1..\layers]$, the activation values are $(H+1)$-tuples of activation values at layer $k-1$, and the activation function is defined by
    \[
    f^{\act}_k(y,b_1, \ldots,b_H) = (y, b_1, \ldots, b_H)\text{.}
    \]
    \item For each layer $k \in [1..\layers]$ and attention head $h \in [1..H]$, the attention function $f^{\att}_{k,h}$ returns an integer in $[0..N-1]$, where $N$ is the total number of possible ordered pairs of activation values at layer $k-1$.
\end{itemize}
\end{definition}

\begin{lemma}
\label{lemma:normal-form}
For any Transformer $T \in \text{GUHAT}$, there exists a Transformer $\hat{T} \in \text{GUHAT}$ in informative normal form such that $L(T) = L(\hat{T})$.  Moreover, $\hat{T}$ has the same number of layers and heads as $T$.
\end{lemma}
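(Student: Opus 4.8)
\section*{Proof proposal}

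The plan is to build $\hat{T}$ directly from $T=(\Sigma,\mathcal{A},f,f_{k,h}^{\att},f^{\uha},f_k^{\act},g)$, keeping the same $K$ and $H$, and to verify by induction on the layer index that the activation values of $\hat{T}$ carry exactly enough information to recover those of $T$. In $\hat{T}$ I would take the input function to be $\hat{f}(\sigma,i,n)=(\sigma,i,n)$ and each activation function to be the tuple constructor $\hat{f}_k^{\act}(y,b_1,\ldots,b_H)=(y,b_1,\ldots,b_H)$, exactly as informative normal form demands. It then remains only to specify the attention functions $\hat{f}_{k,h}^{\att}$ and the output function $\hat{g}$, and to prove $L(\hat T)=L(T)$.

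The central device is a family of \emph{decoding maps} $\phi_k$, defined by induction on $k$, sending each layer-$k$ activation value of $\hat{T}$ to the corresponding activation value of $T$. I would set $\phi_0(\sigma,i,n)=f(\sigma,i,n)$, and, since a layer-$k$ value of $\hat{T}$ is by construction a tuple $(y,b_1,\ldots,b_H)$ of layer-$(k-1)$ values, define $\phi_k(y,b_1,\ldots,b_H)=f_k^{\act}(\phi_{k-1}(y),\phi_{k-1}(b_1),\ldots,\phi_{k-1}(b_H))$. Using $\phi_{k-1}$ I would then let the normal-form attention function decode its arguments, apply the original attention function, and finally replace the resulting real score by an integer rank: $\hat{f}_{k,h}^{\att}(\hat{y},\hat{y}')=\rho_{k,h}\bigl(f_{k,h}^{\att}(\phi_{k-1}(\hat{y}),\phi_{k-1}(\hat{y}'))\bigr)$, where $\rho_{k,h}(r)=|\{s : s<r\}|$ counts the strictly smaller possible scores. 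The output function is $\hat{g}(\hat{y})=g(\phi_K(\hat{y}))$. Because GUHAT places no restrictions on any of its functions, all of these maps---including $\phi_k$ and $\rho_{k,h}$---are legitimate.

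The correctness proof is the invariant, established by induction on $k$: for every input of length $n$ and every position $i$, $\phi_k(\hat{y}_i^{(k)})=y_i^{(k)}$. The base case is immediate from the definition of $\phi_0$. For the step, the hypothesis gives $\phi_{k-1}(\hat{y}_j^{(k-1)})=y_j^{(k-1)}$ for all $j$, so $\hat{a}_{i,j,k,h}=\rho_{k,h}(a_{i,j,k,h})$; since $\rho_{k,h}$ preserves both strict order and equality of scores, the leftmost maximizing position $j^*$ of row $i$ is identical in $T$ and $\hat{T}$. Hence $f^{\uha}$ selects $\hat{y}_{j^*}^{(k-1)}$ in $\hat{T}$ and $y_{j^*}^{(k-1)}$ in $T$, whose images under $\phi_{k-1}$ agree; feeding these through the tuple constructor and $\phi_k$ reproduces $y_i^{(k)}=f_k^{\act}(y_i^{(k-1)},b_{i,k,1},\ldots,b_{i,k,H})$. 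Applying $\hat{g}$ at position $n$ of layer $K$ then yields $\hat{T}(x)=g(\phi_K(\hat{y}_n^{(K)}))=g(y_n^{(K)})=T(x)$.

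The step I expect to require the most care is verifying that the attention values genuinely lie in $[0..N-1]$ while preserving the argmax. For a fixed length $n$ there are only finitely many inputs, hence finitely many layer-$(k-1)$ activation values and finitely many ordered pairs of them, so the set of possible scores is finite and $\rho_{k,h}$ lands in $[0..N-1]$ with $N$ the number of such pairs, matching the normal-form requirement. The only delicate point is that a single fixed attention function must implement this per-length ranking; this is legitimate precisely because every activation value of $\hat{T}$ retains its layer-$0$ component and therefore the value of $n$, so $\hat{f}_{k,h}^{\att}$ can recover $n$ from its arguments and rank within the finite score set for that length, its behaviour on pairs encoding inconsistent lengths being irrelevant since such pairs never arise in a computation.
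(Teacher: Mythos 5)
Your proposal is correct and follows essentially the same route as the paper's proof: your decoding maps $\phi_k$ are exactly the paper's translation functions $t_k$, your rank map $\rho_{k,h}$ is the paper's ``rank in the sorted list of distinct attention values,'' and your inductive invariant $\phi_k(\hat{y}_i^{(k)})=y_i^{(k)}$ is precisely the correctness claim the paper makes. If anything, you are more explicit than the paper on one point it glosses over: that a single fixed attention function can implement the per-length ranking because every normal-form activation value retains $n$ in its layer-$0$ component.
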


\begin{proof}
Let $T$ be a GUHAT with $\layers$ layers and $H$ heads, with input alphabet $\Sigma$, input function $f$, attention functions $f^{att}_{k,h}$, activation functions $f^{act}_k$, and output function $g$.  We describe how to construct functions for an equivalent Transformer $\hat{T}$ in GUHAT in informative normal form, which also has $\layers$ layers and $H$ heads.  We assume that $n$ is the input length.

For $\hat{T}$ the input function $\hat{f}(\sigma,i,n)$ is defined to return the triple $(\sigma,i,n)$.
Note that there are at most $|\Sigma|n$ possible initial activation values.
We also define a function $t_0$ that translates initial activation values for $\hat{T}$ into initial activation values for $T$ by
$t_0(\sigma,i,n) = f(\sigma,i,n)$.

Now, we induct on the layers of $T$ and $\hat{T}$. Assume that we have defined attention and activation functions for $\hat{T}$ for layers before $k$ (where the initial activation values are treated as ``layer $0$''), and a translation function $t_{k-1}$ that translates all possible activation values for $\hat{T}$ from the previous layer into activation values for $T$ from the previous layer. To define the attention function for $\hat{T}$ for layer $k$ for head $h$, we enumerate all the possible pairs $\hat{y}_i$ and $\hat{y}_j$ of activation values of $\hat{T}$ at layer $k - 1$, and determine the corresponding attention values of $T$, which we denote by
$v_{k,h}(\hat{y}_i,\hat{y}_j) = f^{\att}_{k,h}(t_{k-1}(\hat{y}_i),t_{k-1}(\hat{y}_j))$.
We make a list of all the \emph{distinct} resulting values and sort them into increasing order.  Then we define $\hat{f}^{att}_{k,h}(\hat{y}_i,\hat{y}_j)$ to be the rank of $v_{k,h}(\hat{y}_i,\hat{y}_j)$ in this sorted list. The activation function for $\hat{T}$ for layer $k$ is, by definition,
\[
\hat{f}^{\act}_k(y, b_1,\ldots,b_H) = (y, b_1,\ldots,b_H)\text{.}
\]
The translation function for layer $k$ is defined by
\begin{align*}
&\mathrel{\phantom{=}} t_k(y, b_1,\ldots,b_H) \\
&= f_k^{act}(t_{k-1}(y),t_{k-1}(b_1),\ldots,t_{k-1}(b_H))\text{,}
\end{align*}
that is, we translate each of the component activation values using $t_{k-1}$ and then apply the activation function of $T$.

Finally, the output function for $\hat{T}$ is defined by $\hat{g}(\hat{y}) = g(t_\layers(\hat{y}))$, that is, we translate the layer $\layers$ activation value $\hat{y}$ of $\hat{T}$ to the layer $K$ activation value of $T$, and apply the output function of $T$.

By construction, $\hat{T}$ is in informative normal form, and it has $\layers$ layers and $H$ heads. It is not difficult to see that for any input $x$, the translations $t_k(\hat{y})$ of the activation values $\hat{y}$ of $\hat{T}$ are equal to the corresponding activation values of $T$, and the outputs $\hat{T}(x) = T(x)$ are equal as well. Thus, $L(\hat{T}) = L(T)$.
\end{proof}

To illustrate the construction of $\hat{T}$ in the proof of \autoref{lemma:normal-form}, we briefly show how an informative normal form version of the Transformer for \palindromes\ from \autoref{sec:palindrome-example} would process the input $x = abcca\$$. Because the attention functions in that example return $0$ or $1$, their translation is simplified.

The initial activation values and layer $1$ attention function are the same as in the example. The resulting layer $1$ activation sequence, consisting of a sequence of paired initial activations and attention values, is
\begin{align*}
((a,&1,6),(a,5,6)),\\ &((b,2,6),(c,4,6)), \ldots, ((\$,6,6),(\$,6,6)).
\end{align*}
The translation $t_1$ maps $((x_i,i,n), (x_j,j,n))$ to $(0,i)$ if $x_i = x_j$ and $(1,i)$ if $x_i \neq x_j$. When applied to the above activation sequence, this yields the previous example's layer $1$ activation sequence.

The layer $2$ attention function applied to a pair of layer $1$ activation values  $((x_i,i,n),(x_j,j,n))$ and $((x_k,k,n),(x_{\ell},\ell,n))$ first applies the translation function $t_1$ to these two activation values to recover the pairs $(r,i)$ and $(s,j)$, and then applies the example's layer $2$ attention function to these to yield the attention value $s$.

The layer $2$ translation function maps a layer $2$ activation value \[(((x_i,i,n),(x_j,j,n)),((x_k,k,n),(x_{\ell},\ell,n)))\] to $(i,k)$.
For layer $2$ and position $6$ the activation value for this input is
\[(((\$,6,6),(\$,6,6)),((b,2,6),(c,4,6))),\] which is mapped to $(6,2)$ by $t_2$.
The previous example's output function compares $6$ and $2$ and returns $0$, rejecting the input $x$.

\section{From GUHAT to Circuits}
\label{sec:guHAT-to-family-of-circuits}

In this section we show that for every language $L \in \text{GUHAT}$, we can construct a family of Boolean circuits of constant depth and polynomial size that also recognizes $L$.
This will prove the following, which is our main result.
\begin{theorem}
\label{theorem:guHAT0-in-AC0}
Every language in GUHAT is recognized by a family of circuits in AC$^0$.
\end{theorem}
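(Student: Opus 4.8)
The plan is to invoke \autoref{lemma:normal-form} and assume without loss of generality that $T$ is a GUHAT in informative normal form with $\layers$ layers and $H$ heads. Because circuit families are non-uniform, the circuit $C_{sn}$ handling inputs of length $n$ is free to hardwire any quantity that depends only on $T$ and on $n$; only the dependence on the actual symbols $x_1,\dots,x_n$ must be realized by gates. The circuit will simulate $T$ one layer at a time, maintaining at every position $i$ and layer $k$ a \emph{one-hot} encoding of the activation value $y_i^{(k)}$ over the finite set of values this cell can take as $x$ ranges over $\Sigma^n$. The structural fact that makes this feasible, and which is laid bare by the normal form, is that each such set is finite and---since $H$ and $\layers$ are constants of $T$---of size polynomial in $n$.

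For the base case, the initial value $y_i^{(0)}=(x_i,i,n)$ depends only on the single symbol $x_i$, so a constant-size DNF reading the $s$ bits of $h(x_i)$ produces its one-hot encoding over the at most $|\Sigma|+1$ possibilities. Assume inductively that one-hot encodings of $y_i^{(k-1)}$ are available at all positions. I would then add a constant number of gate levels to compute, in order: (i) the attention scores $a_{i,j,k,h}=f^{\att}_{k,h}(y_i^{(k-1)},y_j^{(k-1)})$, each a finite function of the pair $(y_i^{(k-1)},y_j^{(k-1)})$ and therefore the output of a depth-two circuit over the two one-hot encodings; (ii) for each $j$, an indicator $\mathrm{sel}_{i,j,k,h}$ that $j$ is the \emph{leftmost} maximizer, namely $\bigwedge_{j'<j}\{a_{i,j,k,h}>a_{i,j',k,h}\}\wedge\bigwedge_{j'>j}\{a_{i,j,k,h}\ge a_{i,j',k,h}\}$, where each comparison is itself a depth-two function of the one-hot scores; (iii) the pooled value $b_{i,k,h}=y_{j^\ast}^{(k-1)}$ via the multiplexer whose one-hot bit for a value $v$ is $\bigvee_j\bigl(\mathrm{sel}_{i,j,k,h}\wedge\{y_j^{(k-1)}=v\}\bigr)$; and (iv) the new value $y_i^{(k)}=(y_i^{(k-1)},b_{i,k,1},\dots,b_{i,k,H})$, whose one-hot bit for a tuple is the AND of the corresponding component bits. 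Applying the finite output function $g$ to the one-hot encoding of $y_n^{(\layers)}$---a single OR over the accepting values---produces the output bit.

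The depth is manifestly constant: each of the $\layers$ layers contributes a fixed number of levels, and the output gate adds one more. The crux of the argument, and the step I expect to be the main obstacle, is bounding the size by a polynomial. Let $M_k$ denote the number of distinct activation values occurring at layer $k$ across all positions and all inputs of length $n$. Then $M_0=O(n)$, and in normal form every layer-$k$ value is an $(H+1)$-tuple of layer-$(k-1)$ values, giving $M_k\le M_{k-1}^{H+1}$ and hence $M_k\le M_0^{(H+1)^k}=n^{O((H+1)^k)}$. Since $H$ and $\layers$ are fixed, each $M_k$ is a polynomial in $n$, so the one-hot encodings, the pairwise DNFs of step (i), the comparison-and-AND networks of step (ii), and the multiplexers of step (iii) are all of polynomial size; summing over the constantly many layers keeps $C_{sn}$ polynomial. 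The secondary point requiring care is the correctness of the tie-breaking indicator in (ii): one must check that the conjunction singles out exactly the leftmost of the tied maximizers, matching the definition of $f^{\uha}$, which holds because the strict inequalities imposed for $j'<j$ exclude every earlier maximizer.
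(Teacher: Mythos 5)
Your proposal is correct, and its skeleton is the same as the paper's: reduce to informative normal form via \autoref{lemma:normal-form}, exploit the fact that the normal form's tupling structure bounds the set of possible values at each layer, and simulate $T$ layer by layer with a constant number of gate levels per layer (score computation, pairwise comparisons, leftmost-maximizer selection, multiplexing, tupling, then the output function). Where you genuinely diverge is the data representation, and this changes which supporting lemmas are needed. The paper encodes every activation value and attention score in \emph{binary}, bounds the representation length by $O(\log n)$ bits (\autoref{lemma:activation-attention-bit-bounds}), and then leans on a generic fact (\autoref{lemma:b-inputs}, \autoref{corollary:log-n-inputs}) that any Boolean function on $O(\log n)$ bits has a depth-$3$, polynomial-size DNF circuit; this is what realizes the attention, comparison, and output functions. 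You instead maintain a \emph{one-hot} encoding over the polynomially many possible values, so every finite function of one or two cells becomes a depth-$2$ circuit by brute-force enumeration of value pairs, and no DNF lemma is needed; your counting bound $M_k \le M_{k-1}^{H+1} \le n^{O((H+1)^k)}$ is precisely the exponentiated form of the paper's bit bound. The trade-off: binary encoding keeps the per-value wire count logarithmic and concentrates the work in one reusable lemma, while one-hot encoding spends polynomially many wires per value but makes both the size accounting and the correctness of each stage transparent, and incidentally needs less from the normal form (you never use the requirement that attention values be small integer ranks, since comparison outcomes are hard-wired non-uniformly). Your tie-breaking conjunction (strict comparisons against $j' < j$, non-strict against $j' > j$) differs cosmetically from the paper's two-stage construction (maximizer indicators $m_{i,j,k,h}$, then exclusion of earlier maximizers), but both correctly implement $f^{\uha}$. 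One formality you omit: recognition is defined on inputs $x \in \Sigma^n$ while $T$ runs on $x\$$, so the circuit must hard-wire the encoding of the end-of-sequence symbol at position $n+1$, as the paper does when it converts its circuit for $L\$$ into one for $L$; this is easily patched and does not affect correctness.
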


Let $L$ be a language over $\Sigma$ that is in GUHAT.
By \autoref{lemma:normal-form}, we may assume that $L$ is recognized by GUHAT Transformer $T$ in informative normal form. Assume $T$ has $\layers$ layers and $H$ heads.

What we describe below is a family of circuits to recognize the end-marked language $L\$$, which can easily be converted to a family of circuits that recognizes $L$ by hard-wiring the representation of the end-of-sequence symbol $\$$ at the end of the input string using constant gates.
Let $s = \ell(|\Sigma|+1)$ and let $h$ be any binary symbol encoding for $\Sigma \cup \{\$\}$. We construct a family of Boolean circuits $\{C_n\}$ of constant depth and polynomial size such that for all positive integers $n$ and all $x \in \Sigma^{n-1}$, $x \in L$ if and only if $C_{sn}(h(x\$)) = 1$.

The key step of the proof is to bound the number of bits needed to represent attention and activation values for an input sequence of length $n$ by $O(\log n)$, where the suppressed constants depend on $\layers$ and $H$.

\begin{lemma}
\label{lemma:activation-attention-bit-bounds}
Let $T$ be a GUHAT in informative normal form with $\layers$ layers and $H$ heads, and alphabet $\Sigma$. Let $s = \ell(|\Sigma|+1)$.
Then for any input of length $n$ and any $k \in [0..\layers]$, the activation values at layer $k$ can be represented by $(H+1)^k (2\ell(n) + s)$ bits, and for $k \ge 1$, the attention scores at layer $k$ can be represented by $2(H+1)^{k-1}(2 \ell(n) + s)$ bits.
\end{lemma}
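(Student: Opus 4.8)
The plan is to prove both bounds simultaneously by induction on the layer index $k$, leveraging the rigid structure of the informative normal form. Write $B_k = (H+1)^k(2\ell(n)+s)$ for the claimed activation bound. For the base case $k=0$, the normal form makes the initial activation value at position $i$ the triple $(\sigma,i,n)$: the symbol $\sigma \in \Sigma \cup \{\$\}$ needs $s = \ell(|\Sigma|+1)$ bits, while the position $i \in [1..n]$ and the length $n$ each need $\ell(n)$ bits, for a total of $2\ell(n)+s = B_0$ bits, as required.

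For the inductive step, assume every layer-$(k-1)$ activation value fits in $B_{k-1}$ bits. I would first bound the attention scores. Since each layer-$(k-1)$ value is encoded in $B_{k-1}$ bits, there are at most $2^{B_{k-1}}$ distinct such values and hence at most $2^{2B_{k-1}}$ ordered pairs of them. By the normal form, the attention function returns an integer in $[0..N-1]$ where $N$ is precisely this number of ordered pairs, so $N \le 2^{2B_{k-1}}$; an integer in this range is representable in at most $2B_{k-1} = 2(H+1)^{k-1}(2\ell(n)+s)$ bits, matching the claimed attention bound.

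Next I would bound the layer-$k$ activation values. By the normal form these are the $(H+1)$-tuples $(y,b_1,\ldots,b_H)$, where the first component $y$ is a layer-$(k-1)$ activation value and each pooled value $b_h$ is, by the definition of $f^{\uha}$, simply one of the layer-$(k-1)$ activation values $y_j^{(k-1)}$. Every component therefore fits in $B_{k-1}$ bits, so the whole tuple fits in $(H+1)B_{k-1} = B_k$ bits, closing the induction.

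The step I expect to require the most care is the recognition that each pooled value $b_h$ is itself a layer-$(k-1)$ activation value rather than an object of some larger type; this is exactly what keeps each of the $H+1$ tuple components at $B_{k-1}$ bits and produces the clean geometric growth by the factor $H+1$. The only other point to watch is the squaring in the attention bound, which arises from counting \emph{ordered} pairs of activation values and accounts for both the factor of $2$ and the shift of the exponent from $k$ to $k-1$.
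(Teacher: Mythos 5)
Your proof is correct and follows essentially the same argument as the paper's: the base case counts the bits of $(\sigma,i,n)$, the $(H+1)$-tuple structure of the normal form gives the factor $(H+1)$ per layer, and the normal-form requirement that attention scores lie in $[0..N-1]$ with $N$ the number of ordered pairs of previous-layer activations gives the factor $2$. Your version merely makes the induction and the observation that each pooled value $b_h$ is itself a layer-$(k-1)$ activation value explicit, which the paper leaves implicit.
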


\begin{proof}
For an input sequence of length $n$, the initial activation values are $(\sigma,i,n)$, where $\sigma \in \Sigma \cup \{\$\}$ and $i \in [1..n]$. This can be represented by a string of $2 \ell(n) + s$ bits.  At each successive layer, the activation values are a tuple of $(H+1)$ values from the previous layer, which multiplies the number of bits required to represent them by $(H+1)$.
Also, the range of attention scores is bounded by the number of ordered pairs of activation values at the previous layer, so attention values can be represented by twice the number of bits to represent an activation value at the previous layer.
\end{proof}
It is worth observing that the bounds provided by \autoref{lemma:activation-attention-bit-bounds} do not hold in the case of AHAT. Attention scores may be the result of the average of an arbitrary subset of the possible inputs, which means that there are exponentially more possible activation values at each layer.

The following elementary facts about Boolean circuits will be useful.
\begin{lemma}
\label{lemma:b-inputs}
An arbitrary Boolean function $f$ of $n$ inputs and $m$ outputs can be computed by a depth $3$ circuit of size at most $m(n2^n + 2^n + n)$.
\end{lemma}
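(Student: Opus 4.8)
The plan is to use the standard disjunctive normal form (DNF) construction, applied independently to each of the $m$ output coordinates. Writing $f = (f_1, \dots, f_m)$ where each $f_t : \{0,1\}^n \to \{0,1\}$ is the $t$-th output bit, I would build a depth-$3$ circuit for each $f_t$ separately (sharing nothing between coordinates), and then take the disjoint union of these $m$ circuits, designating the outputs $z_1, \dots, z_m$. The depth remains $3$ and the size is the sum of the individual sizes, so it suffices to construct each coordinate's circuit and count its wires.

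For a fixed coordinate $t$, the construction has three levels. First I would introduce $n$ NOT gates computing $\neg x_1, \dots, \neg x_n$ (depth $1$). Next, for each assignment $a \in \{0,1\}^n$ with $f_t(a) = 1$, I would introduce one AND gate---the \emph{minterm} for $a$---whose $n$ inputs are the literals $x_i$ (when $a_i = 1$) or $\neg x_i$ (when $a_i = 0$); this gate lies at depth $2$ and evaluates to $1$ exactly when the input equals $a$. Finally, a single OR gate at depth $3$ takes all of these minterms as inputs and is the output $z_t$. Correctness is immediate, since the OR evaluates to $1$ if and only if the input agrees with some satisfying assignment of $f_t$, i.e.\ if and only if $f_t = 1$.

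The size bound follows by tallying wires per coordinate, recalling that the size of a circuit is its number of (directed) edges. There are at most $2^n$ minterms, each an AND gate of fan-in $n$, contributing at most $n 2^n$ literal-to-AND wires; the single OR gate has fan-in at most $2^n$, contributing at most $2^n$ minterm-to-OR wires; and the $n$ NOT gates contribute $n$ input-to-NOT wires. Each edge is counted exactly once, so the per-coordinate total is at most $n 2^n + 2^n + n$, giving $m(n2^n + 2^n + n)$ overall, with depth $3$ throughout.

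Since this is the textbook DNF argument, there is no genuine mathematical obstacle; the only points needing care are bookkeeping. The wire count must be tallied to match the stated bound exactly, which is why I would duplicate the NOT gates across coordinates rather than share them---sharing would only decrease the count and still respect the upper bound. I would also handle the degenerate cases: when $f_t$ is identically $0$ there are no minterms, so the OR has fan-in $0$, which I would replace with a Constant-$0$ gate; when $f_t$ is identically $1$ all $2^n$ minterms are present and the OR has fan-in $2^n$. Both degenerate cases use no more wires than $n2^n + 2^n + n$, so the bound holds uniformly.
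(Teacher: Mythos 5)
Your proposal is correct and follows essentially the same argument as the paper: a per-output DNF construction (NOT gates, then minterm AND gates, then a single OR gate) with the identical wire count of $n2^n + 2^n + n$ per output, replicated $m$ times at depth $3$. Your additional handling of the degenerate case $f_t \equiv 0$ via a Constant-$0$ gate is a careful touch the paper omits, but it does not change the approach.
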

\begin{proof}
Express each output $z_i$ of $f$ as a disjunctive normal form (DNF) formula of at most $2^n$ terms, each with at most $n$ literals.
Convert each DNF formula to a circuit with one OR gate with inputs from an AND gate for each term, each of whose inputs is either an input to the function, or the result of applying a NOT gate to an input.
In each such circuit, the OR gate has at most $2^n$ input wires, each AND gate has at most $n$ input wires, and each of at most $n$ NOT gates has one input wire, for a total size bounded by $n2^n + 2^n + n$.  The final circuit consists of these $m$ separate circuits computing in parallel, and its size is at most $m$ times the size of each one.
The longest possible path to an output from an input is through a NOT, an AND, and the OR gate, for a depth of at most $3$.
\end{proof}

\begin{corollary}
\label{corollary:log-n-inputs}
If a Boolean function $f$ has at most $c \log n$ inputs and at most $d \log n$ outputs,
then it may be computed by a Boolean circuit of depth $3$
and size at most $(d\log n)(n^c(c\log n) + n^c + c\log n)$.
\end{corollary}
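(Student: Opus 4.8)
The plan is to apply \autoref{lemma:b-inputs} directly, specializing its parameters. That lemma bounds the size of a depth-$3$ circuit computing an arbitrary Boolean function with $n$ inputs and $m$ outputs by $m(n2^n + 2^n + n)$. Here I would instead regard the function $f$ as having some number of inputs $p \le c\log n$ and some number of outputs $q \le d\log n$, so that the lemma immediately yields a depth-$3$ circuit of size at most $q(p\,2^p + 2^p + p)$. The task is then simply to rewrite this bound in terms of $n$.

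First I would observe that the expression $q(p\,2^p + 2^p + p)$ is monotonically nondecreasing in each of $p$ and $q$, so replacing $p$ by $c\log n$ and $q$ by $d\log n$ only enlarges it; it therefore suffices to evaluate the bound at $p = c\log n$ and $q = d\log n$. The one substantive step is the identity $2^{c\log n} = n^c$, which holds because $\log$ denotes $\log_2$ as fixed in \autoref{sec:preliminaries}. Substituting then gives size at most $(d\log n)\bigl((c\log n)n^c + n^c + c\log n\bigr) = (d\log n)(n^c(c\log n) + n^c + c\log n)$, exactly as claimed, while the depth of $3$ is inherited verbatim from \autoref{lemma:b-inputs}. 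There is essentially no obstacle to this argument: the only points requiring any care are the base-$2$ logarithm convention that powers the identity $2^{c\log n}=n^c$, and the monotonicity remark that lets the ``at most'' hypotheses be discharged by evaluating the bound at exactly $c\log n$ inputs and $d\log n$ outputs.
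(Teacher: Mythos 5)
Your proposal is correct and matches the paper's (implicit) proof exactly: the corollary is stated as an immediate consequence of \autoref{lemma:b-inputs}, obtained precisely by substituting the input/output counts into the bound $m(n2^n + 2^n + n)$ and simplifying with $2^{c\log n} = n^c$. Your added remarks on monotonicity and the base-$2$ convention are fine but not a departure from the intended argument.
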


With the $O(\log n)$ bound on the number of bits to represent activation and attention values, \autoref{lemma:activation-attention-bit-bounds} yields circuits of constant depth and size polynomial in $n$ for the input, attention, activation, and output functions.
Additional circuitry is necessary to implement the comparison of attention scores and selection of the activation value to attend to for each position, layer, and head.

We construct the overall circuit $C_{sn}$ according to the layers of $T$, starting with the input function. Let the inputs to $T$ be $x_i$ for $i \in [1..n]$. The inputs to $C_{sn}$ are $x_{i,j}$ for $i \in [1..n]$ and $j \in [1..s]$, where $x_{i,j}$ are the bits of $h(x_i)$, representing the binary encoding of input symbol $x_i$.
At layer $0$ for position $i$, the value of $y_i^{(0)} = f(x_i,i,n) =(x_i,i,n)$ is achieved by having the input wires $x_{i,j}$ for $j \in [1..s]$ followed by a sequence of constants $0$ or $1$ representing $\bin(i,n)$ and $\bin(n,n)$ for a total of $2\ell(n) + s$ wires representing the value $(x_i,i,n)$.

Inducting on layers, we assume that for some $k \in [1..\layers]$ the circuit $C_{sn}$ has been constructed to contain the wires representing all the activation values $y_i^{(k-1)}$ for $i \in [1..n]$ at layer $k-1$. The portion of the circuit computing the representations of activation values at layer $k$ is described as follows. Fix a position $i \in [1..n]$ and a head $h \in [1..H]$. For each $j \in [1..n]$, there is a circuit $A_{i,j,k,h}$ that has as input the wires for the activation values $y_i^{(k-1)}$ and $y_j^{(k-1)}$ and as output, wires representing the nonnegative integer attention score $a_{i,j,k,h}$ in binary. Each of these circuits $A_{i,j,k,h}$ has $2(H+1)^{k-1}(2\ell(n) + s)$ inputs and outputs by \autoref{lemma:activation-attention-bit-bounds}, and therefore can be computed using depth $3$ and size polynomial in $n$, by \autoref{corollary:log-n-inputs}. All 
$Hn^2$ such circuits for layer $k$ operate in parallel, for overall depth $3$ and size polynomial in $n$.

We next describe the circuit that implements the pooling function $f^{\uha}$. 
For each pair $j, j' \in [1..n]$, there is a circuit $D_{i,j,j',k,h}$ 
whose inputs are the outputs of
$A_{i,j,k,h}$ and $A_{i,j',k,h}$ and whose output is a single wire $g_{i,j,j',k,h}$ with a value of $1$ if $a_{i,j,k,h} \ge a_{i,j',k,h}$ and $0$ otherwise.
Because of the bounds on the number of inputs and outputs, each of these circuits can have depth $3$ and size polynomial in $n$ by \autoref{corollary:log-n-inputs}.
These $n^2$ circuits all compute in parallel.\footnote{In fact, comparison of two $b$-bit integers can be done with a Boolean circuit of constant depth and size polynomial in $b$, but that is not necessary for the present purpose.} Then for each position $j$, whether $j$ maximizes $a_{i,j,k,h}$ can be computed by an AND gate whose inputs are $g_{i,j,j',k,h}$ for all $j' \in [1..n]$.  Let the output of this AND gate be denoted $m_{i,j,k,h}$.
Then $m_{i,j,k,h} = 1$ if and only if the position $j$ maximizes $a_{i,j,k,h}$.
This increases the depth by $1$.

For each $j$, an indicator $z_{i,j,k,h}$ is computed by an AND gate whose inputs are $m_{i,j,k,h}$ and $\text{NOT}(m_{i,j',k,h})$ for all $j' < j$.  Thus, $z_{i,j,k,h} = 1$ if and only if $j$ is the leftmost position that maximizes $a_{i,j,k,h}$.
This increases the depth by $2$.

Finally, these indicator values are used to combine the layer $k-1$ activation values in a selection circuit, yielding the representation of the activation value $b_{i,k,h} = y_j^{(k-1)}$ such that $z_{i,j,k,h} = 1$.
In general, such a selection circuit takes as input $t$ selector bits $z_1, \ldots, z_t$, where exactly one $z_j = 1$, and $t$ input values $w_1, \ldots, w_t$, where each $w_r$ consists of $S$ bits. It outputs $S$ bits representing the selected $w_j$ (for which $z_j = 1$).
Letting $w_{r,s}$ denote bit $s$ of $w_r$, the computation can be described as $v_{r,s} = w_{r,s} \wedge z_r$ for $r \in [1..t]$ and $s \in [1..S]$, which can be computed by one layer of $tS$ AND gates in parallel. 
Then the bits of the output are $u_s = \bigvee_{r = 1}^S v_{r,s}$ for $s \in [1..S]$, which can be computed by one layer of $S$ OR gates in parallel.
Thus, the selection circuit adds $2$ to the depth, and a polynomial in $n$ to the size.

Because each activation function for a GUHAT in informative normal form simply returns its arguments, no further computation is needed for the activation values.  The representation of the activation value $y_i^{(k)}$ is just the sequence of wires representing $y_i^{(k-1)}$ followed by those representing $b_{i,k,1}$, through $b_{i,k,H}$.

To produce the output of the circuit, we note that the representation of $y_n^{(L)}$ has $O(\log n)$ bits and the output of $g$ is a single bit, so $g$ can be implemented by a Boolean circuit of constant depth and size polynomial in $n$, by \autoref{corollary:log-n-inputs}. This concludes the proof of \autoref{theorem:guHAT0-in-AC0}.

\citet{FSS84} prove that the \parity, \equality, and \majority\ languages are not in AC$^0$, which immediately implies the following.
\begin{corollary}
\label{corollary:parity-and-majority-not-in-AC0}
GUHAT does not contain the languages \parity, \majority, or \equality.
\end{corollary}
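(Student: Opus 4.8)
The plan is to argue by contraposition, simply chaining together our main upper bound with the classical lower bounds of \citet{FSS84}. Suppose, toward a contradiction, that one of \parity, \majority, or \equality\ belonged to GUHAT. Then \autoref{theorem:guHAT0-in-AC0} would immediately apply, producing a family of circuits of constant depth and polynomial size that recognizes that language; by our definition of language recognition via Boolean circuits, the language would therefore lie in AC$^0$. But \citet{FSS84} establish precisely that none of these three languages is in AC$^0$, which is the contradiction we seek. Since the argument treats the three languages identically, all three are excluded from GUHAT simultaneously, and no separate case analysis is required.

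There is essentially no obstacle to overcome here, as the corollary is an immediate consequence of \autoref{theorem:guHAT0-in-AC0}; the only point meriting a moment's care is to confirm that the notion of AC$^0$ membership delivered by the theorem coincides with the one ruled out by \citet{FSS84}. The theorem yields a circuit family recognizing $L$ relative to some binary symbol encoding $h$, which is exactly the definition of ``$L$ is in AC$^0$'' adopted earlier in the paper; and for the binary alphabet $\{0,1\}$ underlying \parity, \majority, and \equality\ one may take $h$ to be the identity encoding, so no subtleties about the choice of encoding or end-marker handling intervene. With this alignment observed, the deduction closes and the corollary follows.
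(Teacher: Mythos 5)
Your proposal is correct and follows essentially the same route as the paper, which presents this corollary as an immediate consequence of \autoref{theorem:guHAT0-in-AC0} together with the lower bounds of \citet{FSS84}. The only nitpick is that under the paper's definition a binary symbol encoding of $\{0,1\}$ maps each symbol to $\ell(2)=2$ bits rather than being the literal identity, but this changes nothing: any such encoding is itself computable by constant-depth circuitry, so the contradiction with \citet{FSS84} goes through exactly as you argue.
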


To see that the \dyck-$k$ languages are also not in AC$^0$, we reduce from the \equality\ language.

\begin{corollary}
\label{Dyck-languages-not-in-AC0}
For all $k \ge 1$, the language \dyck-$k$ is not in AC$^0$, and is therefore not in GUHAT.
\end{corollary}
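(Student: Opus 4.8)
The plan is to exhibit an $\text{AC}^0$ reduction from $\equality$ to $\dyck$-$k$ and then appeal to the closure of $\text{AC}^0$ under such reductions, together with the fact, recorded in \autoref{corollary:parity-and-majority-not-in-AC0}, that $\equality \notin \text{AC}^0$. The naive idea---mapping $0 \mapsto\,($ and $1 \mapsto\,)$---does not suffice, because a string over $\{(,)\}$ lies in $\dyck$-$1$ not merely when its numbers of opening and closing brackets agree, but also when every prefix has at least as many opening as closing brackets. The key idea is therefore to neutralize this prefix condition by padding, so that balancedness of the image becomes equivalent to equality of bracket counts, which in turn mirrors membership in $\equality$.

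Concretely, for an input $w = w_1 w_2 \cdots w_n \in \{0,1\}^n$ I would define the map
\[
f(w) = \underbrace{(\,(\cdots(}_{n} \; b(w_1) b(w_2) \cdots b(w_n) \; \underbrace{)\,)\cdots)}_{n}\text{,}
\]
where $b(0) =\,($ and $b(1) =\,)$, and where every bracket is taken to be the first bracket pair of the $\dyck$-$k$ alphabet. The claim is that $w \in \equality$ if and only if $f(w) \in \dyck$-$k$. Since $f(w)$ uses only one type of bracket pair, its membership in $\dyck$-$k$ coincides with membership in $\dyck$-$1$, so this single construction settles all $k \ge 1$ at once.

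For correctness, write $p$ and $q$ for the numbers of $0$s and $1$s in $w$. The image $f(w)$ has $n + p$ opening and $q + n$ closing brackets, so it can be balanced only when $p = q$, i.e.\ when $w \in \equality$. Conversely, when $p = q$ the prefix of $n$ opening brackets raises the nesting depth to $n$; the middle block then lowers the depth by at most its number of closing brackets, which is at most $n$, so the depth never falls below $0$ and returns to exactly $n$; and the closing suffix lowers it monotonically to $0$. Hence $f(w)$ is correctly nested and lies in $\dyck$-$1$, establishing the equivalence. This nesting analysis in the converse direction is the only delicate point, and the length-$n$ padding is chosen precisely to make it go through.

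Finally, $f$ is computable in $\text{AC}^0$: each output symbol is either a fixed bracket (in the padding) or a copy of a single input symbol (in the middle block), so under the binary symbol encodings each output bit is a constant or a single input literal, making $f$ a projection. Composing any hypothetical constant-depth, polynomial-size circuit family for $\dyck$-$k$ with this projection would then recognize $\equality$ within $\text{AC}^0$, contradicting \autoref{corollary:parity-and-majority-not-in-AC0}; thus $\dyck$-$k \notin \text{AC}^0$ for every $k \ge 1$. Because $\text{GUHAT} \subseteq \text{AC}^0$ by \autoref{theorem:guHAT0-in-AC0}, it follows immediately that $\dyck$-$k$ is not in GUHAT.
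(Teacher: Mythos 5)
Your proposal is correct and follows essentially the same route as the paper: both reduce \equality\ to \dyck\ via the padding $w \mapsto [^n\, w\, ]^n$ (with $0 \mapsto [$, $1 \mapsto ]$), use the same prefix-count argument for correctness, and note that the reduction is a trivial (constant/projection) circuit transformation, so a hypothetical AC$^0$ family for \dyck-$k$ would yield one for \equality, contradicting the result of \citet{FSS84}. The only cosmetic difference is that you cite \autoref{corollary:parity-and-majority-not-in-AC0} (a statement about GUHAT) for $\equality \notin \text{AC}^0$, whereas the fact actually needed is the underlying circuit lower bound of \citet{FSS84} stated just before that corollary.
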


\begin{proof}
It suffices to prove this for $k = 1$.
Assume that there is a family $\{C_n\}$ of Boolean circuits in AC$^0$ that recognizes \dyck-$1$.  We may assume that the binary symbol encoding is $h([) = 0$ and $h(]) = 1$.
We show how to use this to construct a family $\{E_n\}$ of Boolean circuits in AC$^0$ that recognizes the \equality\ language, a contradiction.

$E_n$ is constructed from $C_{3n}$ as follows.  If the inputs to $E_n$ are $x_1, \ldots, x_n$, then $E_n$ consists of $C_{3n}$ with its first $n$ inputs set to the constant $0$, its middle $n$ inputs set to $x_1, \ldots, x_n$, and its last $n$ inputs set to the constant $1$.

Let $x$ be any element of $\{0,1\}^n$.
If the number of occurrences of $0$ is not equal to the number of occurrences of $1$ in $x$, then the input to $C_{3n}$ has unequal numbers of $[$ and $]$ symbols, which is not in \dyck-$1$ and $x$ is rejected.
If the number of occurrences of $0$ is equal to the number of occurrences of $1$ in $x$, then in any prefix of the input to $C_{3n}$, the number of occurrences of $]$ is less than or equal to the number of occurrences of $[$. At the end of the input, the number of occurrences of $[$ is equal to the number of occurrences of $]$, so the input to $C_{3n}$ is in \dyck-$1$ and $x$ is accepted.
Thus $\{E_n\}$ is a family of Boolean circuits in AC$^0$ that recognizes the language \equality, a contradiction.
\end{proof}

\section{Discussion and Conclusions}
\label{sec:conclusions}

We have defined formal language recognition by the encoder portion of a Transformer network using generalized unique hard attention (GUHAT), unique hard attention (UHAT), and averaging hard attention (AHAT), and shown that languages in UHAT and GUHAT are recognizable by constant depth, polynomial size families of circuits, that is, families of circuits in the complexity class AC$^0$.  This strengthens the negative result of \citet{Hahn20} that the languages \parity\ and \dyck-$k$ are not in GUHAT, and provides a simpler and more general proof. Combined with prior results of \citet{PMB2019} showing that the language \majority\ is in AHAT, or \citet{bhattamishra-etal-2020-ability} showing that the language \dyck-$1$ is in AHAT, this shows that AHAT contains languages that are not in GUHAT or UHAT.

Many intriguing open questions remain. What classical closure properties hold for the classes of languages GUHAT, UHAT, and AHAT?  Closure under complement just requires complementing the output function $g$, and closure under pairwise union and intersection should be straightforward using a parallel approach; but what about closure under homomorphism, inverse homomorphism, concatenation, or Kleene star? We briefly observe that GUHAT and UHAT cannot be closed under both Kleene star and concatentation lest they contain all regular languages, including \parity. 

Existing formal models and indeed practical implementations of Transformers vary in their representation of position information, whether as an absolute representation of position, a ratio (e.g., position $i$ in a sequence of length $n$ as  $i/n$), through angle information (e.g., position $i$ by the pair $(\cos \theta_i, \sin \theta_i)$ where $\theta_i = \pi i/2n$), or as an arbitrary learned embedding. In the UHAT and AHAT models, the choice of positional encoding can facilitate positional comparison (e.g., an angle-based encoding allows for equality testing via dot products) or make it uncomputable (e.g., if positional encodings enumerate Turing machines that halt on their own encodings). 
It remains to be understood what effect such differences in position representation have on the expressive power of a model.  

More generally, is it possible to prove that soft attention, which we have not addressed here, is strictly more powerful than even averaging hard attention?
\citet[Theorem B.3]{YPPN2021} present a construction for a soft attention Transformer that recognizes \dyck-$k$. This construction crucially employs specialized encodings of position and layer normalization, whose formal power remains to be understood.

Finally, given the success that Transformers have had as models of natural language, it is perhaps surprising that these models' expressive power seems to be best characterized (or at least bounded) in terms of circuit complexity. Mathematical explorations of natural language have most commonly employed the approach to language complexity afforded by the Chomsky hierarchy and its refinements, which is based on automata and formal grammars. The apparent incomparability of these approaches suggests that the exploration of different types of Transformer models might offer a new approach to the study of the formal properties of natural language.

\section*{Acknowledgements}

We thank the reviewers and the action editor for their work in reviewing this paper.

\bibliography{constant-depth}

\begin{thebibliography}{10}
\expandafter\ifx\csname natexlab\endcsname\relax\def\natexlab#1{#1}\fi

\bibitem[{Arora and Barak(2009)}]{aroraComputationalComplexityModern2009}
Sanjeev Arora and Boaz Barak. 2009.
\newblock \emph{Computational {{Complexity}}: {{A Modern Approach}}}.
\newblock {Cambridge University Press}, {Cambridge, United Kingdom}.

\bibitem[{Bhattamishra et~al.(2020)Bhattamishra, Ahuja, and
  Goyal}]{bhattamishra-etal-2020-ability}
Satwik Bhattamishra, Kabir Ahuja, and Navin Goyal. 2020.
\newblock On the {A}bility and {L}imitations of {T}ransformers to {R}ecognize
  {F}ormal {L}anguages.
\newblock In \emph{Proceedings of the 2020 Conference on Empirical Methods in
  Natural Language Processing (EMNLP)}, pages 7096--7116, Online. Association
  for Computational Linguistics.

\bibitem[{Furst et~al.(1984)Furst, Saxe, and Sipser}]{FSS84}
Merrick Furst, James~B. Saxe, and Michael Sipser. 1984.
\newblock Parity, {{Circuits}}, and the {{Polynomial}}-{{Time Hierarchy}}.
\newblock \emph{Mathematical {S}ystems {T}heory}, 17(1):13--27.

\bibitem[{Hahn(2020)}]{Hahn20}
Michael Hahn. 2020.
\newblock Theoretical {{Limitations}} of {{Self}}-{{Attention}} in {{Neural
  Sequence Models}}.
\newblock \emph{Transactions of the Association for Computational Linguistics},
  8:156--171.

\bibitem[{Luong et~al.(2015)Luong, Pham, and Manning}]{LPM2015}
Thang Luong, Hieu Pham, and Christopher~D. Manning. 2015.
\newblock Effective approaches to attention-based neural machine translation.
\newblock In \emph{Proceedings of the 2015 Conference on Empirical Methods in
  Natural Language Processing}, pages 1412--1421, Lisbon, Portugal. Association
  for Computational Linguistics.

\bibitem[{Merrill et~al.(2022)Merrill, Sabharwal, and Smith}]{MGSS2021}
William Merrill, Ashish Sabharwal, and Noah~A. Smith. 2022.
\newblock {Saturated Transformers are Constant-Depth Threshold Circuits}.
\newblock \emph{Computing Research Repository}, arXiv:2106.16213v3 [cs].

\bibitem[{P{\'e}rez et~al.(2019)P{\'e}rez, Marinkovi{\'c}, and
  Barcel{\'o}}]{PMB2019}
Jorge P{\'e}rez, Javier Marinkovi{\'c}, and Pablo Barcel{\'o}. 2019.
\newblock On the {{Turing Completeness}} of {{Modern Neural Network
  Architectures}}.
\newblock In \emph{{{ICLR}} 2019 {{Conference Track}}}, {New Orleans, LA, USA}.
  {OpenReview}.

\bibitem[{Vaswani et~al.(2017)Vaswani, Shazeer, Parmar, Uszkoreit, Jones,
  Gomez, Kaiser, and Polosukhin}]{Vaswanietal2017}
Ashish Vaswani, Noam Shazeer, Niki Parmar, Jakob Uszkoreit, Llion Jones,
  Aidan~N. Gomez, {\L}ukasz Kaiser, and Illia Polosukhin. 2017.
\newblock Attention is {{All}} you {{Need}}.
\newblock In \emph{Advances in {{Neural Information Processing Systems}} 30},
  pages 5998--6008, {Long Beach, CA, USA}. {Curran Associates, Inc.}

\bibitem[{Yao et~al.(2021)Yao, Peng, Papadimitriou, and Narasimhan}]{YPPN2021}
Shunyu Yao, Binghui Peng, Christos Papadimitriou, and Karthik Narasimhan. 2021.
\newblock Self-{{Attention Networks Can Process Bounded Hierarchical
  Languages}}.
\newblock In \emph{Proceedings of the 59th {{Annual Meeting}} of the
  {{Association}} for {{Computational Linguistics}} and the 11th
  {{International Joint Conference}} on {{Natural Language Processing}}},
  volume 1: Long Papers, pages 3770--3785, {Online}. {Association for
  Computational Linguistics}.

\bibitem[{Yun et~al.(2020)Yun, Bhojanapalli, Rawat, Reddi, and
  Kumar}]{YRBRK2019}
Chulhee Yun, Srinadh Bhojanapalli, Ankit~Singh Rawat, Sashank Reddi, and Sanjiv
  Kumar. 2020.
\newblock Are {{Transformers}} universal approximators of sequence-to-sequence
  functions?
\newblock In \emph{{{ICLR}} 2020 {{Conference Track}}}, {Online}. {OpenReview}.

\end{thebibliography}
\bibliographystyle{acl_natbib}

\end{document}